\newtheorem{theorem}{Theorem}
\newtheorem{example}{Example}
\newtheorem{lemma}{Lemma}
\newcommand{\Mcal}{\mathcal{M}}
\newcommand{\Pcal}{\mathcal{P}}
\newcommand{\Rcal}{\mathcal{R}}
\newcommand{\Scal}{\mathcal{S}}
\newcommand{\Ccal}{\mathcal{C}}
\newcommand{\tops}{\textrm{Top}}
\begin{document}
\bibliographystyle{elsart-harv}
\title{Balanced House Allocation\footnote{We are grateful to Vikram Manjunath for insightful discussions. We also thank Eun Jeong Heo, Parag A. Pathak,  Qianfeng Tang, Guoqiang Tian, and the audience at SUFE Shanghai Micro Workshop for their comments and suggestions.  This paper is based on the third chapter of Long's Ph.D. Dissertation \citep{Long-2016}. Any errors are our own.}}

\date{\today}

\author{Xinghua Long\thanks{Institute for Advanced Research, Shanghai University of Finance and Economics, Shanghai 200433, China; \href{mailto:long.xinghua@mail.shufe.edu.cn} {long.xinghua@mail.shufe.edu.cn}}\ \ and Rodrigo A. Velez\thanks{Department of Economics, Texas A\&M University, College Station, TX 77843;
\href{mailto:rvelezca@tamu.edu}{rvelezca@tamu.edu}; \href{https://sites.google.com/site/rodrigoavelezswebpage/home}
{https://sites.google.com/site/rodrigoavelezswebpage/home}} \\\small{\textit{}}}
\maketitle

\begin{abstract}
We introduce \emph{balancedness} a fairness axiom in house allocation problems. It requires a mechanism to assign the top choice, the second top choice, and so on, on the same number of profiles for each agent. This axiom guarantees equal treatment of all agents at the stage in which the mechanism is announced when all preference profiles are equally likely. We show that, with an interesting exception for the three-agent case, Top Trading Cycles from individual endowments is the only mechanism that is balanced, efficient, and group strategy-proof.
\medskip

\medskip

\textit{JEL classification}:  C72, D63.
\medskip

\textit{Keywords}: house allocation; balancedness; Top Trading Cycles; Trading cycles with three brokers.
\end{abstract}

\section{Introduction}
We consider the allocation of a collective endowment of indivisible goods, objects, when monetary compensation is not available to accommodate conflicts of preference \citep{Hylland-Zeckhauser-1979}. We formulate \emph{balancedness}, a fairness axiom for this problem. It articulates an ideal symmetry notion implemented from the veil of ignorance of a policy maker who chooses the mechanism but is unaware of actual conflicts of preference. Our main result is to prove that, essentially, the celebrated Top Trading Cycles from individual endowments \citep{Shapley-Scarf-1974} are the only mechanisms that are balanced, efficient, and group strategy-proof.

It is instructive to consider a simple three-agent problem. Suppose that three students, Ann, Bob, and Charlie need to be assigned an office in an academic department. The administrator in charge will ask students for their preferences and make an assignment based on this information. The administrator is concerned about the transparency of the process and a possible perceived bias of their choices. Indeed, they may need to announce the protocol by which the allocation will take place. Moreover, randomization may be feasible at the stage of designing the protocol, but may not be feasible after the students report their preferences. It may also be complex to explain how a previous randomization among biased mechanisms provides a fair treatment of the students, when its actual realization does not.

To avoid the perception of bias in the allocation, we propose that the administrator select a mechanism that treats students symmetrically at the stage in which the mechanism is announced but preferences are not reported yet. As a benchmark we propose to assume that all preferences are equally likely. If a mechanism is unbiased, it should induce the same distribution on the ranking of assignments to each participant. That is, the number of preference profiles for which each agent gets their top choice, their second top choice, and so on, is the same for all agents. When this is so we say that the mechanism is balanced.

The perfect storm of constraints that leads to our formulation of balancedness is not far fetched.  In the related problem of school choice\emph{}, \citet{Pathak-Sethuraman-2011} describe how policy makers were concerned about the perception of fairness of mechanisms for the allocation of school seats in NY City. One option being considered was to run a lottery determining a single ranking of all students. Then students would choose their best option in that order. Another option was to run lotteries independently for each school and then use a protocol that gives agents the right to attend the school for which they have priority. As reported by \citet{Pathak-Sethuraman-2011}, the following quote by a policy maker summarizes their view:

\begin{quotation}
\emph{I cannot see how the children at the end of the line are not
disenfranchised totally if only one run takes place. I believe
that one line will not be acceptable to parents. When I
answered questions about this at training sessions, (it did
come up!) people reacted that the only fair approach was to
do multiple runs.}
\end{quotation}

Our running example illustrates the issue. Suppose that the administrator chooses at random an order of students and it is realized to be Ann is first, Bob second, and Charlie third. Even though at design stage all students are treated equally, after the order is realized Ann has a clear advantage over the other two students. Ann will receive her best office. There are possible preferences for which Bob, Charlie, or neither of them receive their best office. Bob also has an advantage over Charlie. Bob will never receive the office he considers worst. This is not so for Charlie.

It is not difficult to find a mechanism that is balanced. For instance, if the administrator chooses a given allocation and implements it independently of preferences, the mechanism is balanced. This leads to gross inefficiency, however. Obviously, balancedness does not make a mechanism appealing by itself.

The question we pose is to describe the balanced mechanisms that also satisfy basic efficiency and incentive properties. We follow \citet{Pycia-Unver-TE-2017} and \citet{Bade-MOR-2020} and require mechanisms be (Pareto) efficient and group strategy-proof.\footnote{Group strategy-proofness requires that no group of agents can lie about their preferences and guarantee a better outcome for at least one of them without decreasing the welfare of any member of the group.} Our main result, Theorem~\ref{Th:Main}, characterizes the family of mechanisms satisfying these three properties.

It turns out that the only balanced, efficient, and group strategy-proof mechanisms are essentially the Top Trading Cycles from individual endowments. These mechanisms operate as follows. The administrator endows each agent with an object and then asks them to trade following the Top Trading Cycles protocol (see Sec.~\ref{Sec:Results} for details). If the number of agents is different from three, these are the only mechanisms satisfying these properties. Curiously, when there are exactly three agents there is an additional class of mechanisms satisfying our desiderata, the Trading Cycles mechanisms with three brokers \citep{Pycia-Unver-TE-2017,Bade-MOR-2020}. In these mechanisms, the administrator assigns a weak form of property rights over the objects to allocate, which is referred to as brokerage in the literature. The administrator commits to an efficient assignment and minimizes the number of agents who receive the object they broker. Ties are resolved by identifying the agent who brokers the most popular object. If this agent competes for this object with exactly another agent, ties are resolved in favor of the other agent. Otherwise, ties are resolved in favor of the agent who brokers the most popular object.

We assume throughout that there is the same number of agents and objects, that agents have strict preferences, and that each agent is to receive an object. This model, introduced  by \citet{Hylland-Zeckhauser-1979}, is usually referred to as house allocation. There is a significant body of literature exploring different normative considerations in this environment. The most relevant to us are those requiring efficiency and incentive properties, which make the base of our normative objectives. In particular, a mechanism is efficient, group strategy-proof, and neutral if and only if it is a serial dictatorship, i.e., for a given order of the agents, the first agent chooses her best object, the second agent chooses her best object among the remaining objects, and so on \citep{Svensson-1999-SCW}.\footnote{A mechanism is neutral if it is invariant under permutation of the name of objects.} In contrast to most other environments, there is a plethora of non-neutral mechanisms that are efficient and group strategy-proof. The first to exhibit well behaved families were \citet{ABDULKADIROGLU-Sonmez-1999} in a related model that allows initial ownership of some objects, and \cite{Papai-2000a} who introduced a general class of mechanisms based on trade from private endowments that are hierarchically revised as agents obtain their allotments. None of these families exhaust the whole class of efficient and group strategy-proof mechanisms. It was only until \citet{Pycia-Unver-TE-2017} and \citet{Bade-MOR-2020} that the complete class of mechanisms satisfying these two properties was characterized. These mechanisms generalize the hierarchical exchange mechanisms by introducing a weak form of control right in which an agent can trade an object without having the right to take it for themself.

Our results allow us to identify a dimension in which Top Trading Cycles is superior to all other efficient and group strategy-proof mechanisms. To the length of our knowledge this is the first result that articulates such a message in our basic environment of house allocation.\footnote{Top Trading Cycles from individual endowments has been found to be fundamentally different from other mechanisms in an augmented model in which objects have priorities over agents. This  is the only efficient and strategy-proof mechanism that minimizes violations of priorities \citep{Abdul-Che-2010,Morill-2013-ET,Abdul-et-al-2017-NBER}} Indeed, previous literature has concentrated on the evaluation of fairness properties of efficient and group strategy-proof mechanisms from an ex-ante perspective. Evaluated at the designing stage when the mechanism designer still has the opportunity to randomize their design parameters, Top Trading Cycles is no different from the other efficient and strategy-proof mechanisms. The basic form of this result was developed independently by \citet{Knuth-1996} and \citet{Abdulkadiroglu-Sonmez-1998}. Consider the random mechanism that is obtained by selecting at random an endowment profile and then running the Top Trading Cycles for this endowment. For a given preference profile, the distribution over matchings generated by this procedure is exactly the same as that one obtains by randomly choosing an order of the agents and then running a serial dictatorship with respect to this order. This equivalence was proven to extend to increasingly larger families of efficient and group strategy-proof mechanisms \citep{Pathak-Sethuraman-2011,Lee-Sethuraman-2011,CARROLL-2014-JME}. Remarkably, the equivalence actually holds for suitably defined symmetrized versions of \emph{all} efficient and group strategy-proof mechanisms \citep{Bade-MOR-2020}.

The closest paper to ours, at a conceptual level, is \citet{Freeman-et-al-2021}. These authors independently define order symmetry, a fairness axiom parameterized by a probability measure $P$ on the space of preferences in house allocation problems. The axiom requires the mechanism guarantee each agent the same probability to obtain their top choice, their second top choice, and so on, when the realization of preferences follows $P$. Our notion of balancedness corresponds to order symmetry with respect to the uniform distribution on the preference space.  While our objectives overlap, our results are essentially independent. \citet{Freeman-et-al-2021}'s main contribution is to show that Top Trading Cycles satisfies order symmetry with respect to symmetric probability distributions. This implies that Top Trading Cycles is balanced, a result first stated by \citet{Long-2016}, and which we include for completeness in the presentation. Our main contribution is to characterize the whole family of balanced, efficient, and group strategy-proof mechanisms. That is, to show that with an interesting exception when there are exactly three agents, Top Trading Cycles is singled out as the only balanced mechanisms among all efficient and group strategy-proof mechanisms.

The remainder of the paper is organized as follows. Sec.~\ref{Sec.model} introduces the model. Sec.~\ref{Sec:Results} presents our results. Finally, Sec.~\ref{Sec.COnclusion} concludes.

\section{Model}\label{Sec.model}

\subsection{Environment}

A set of $n$ objects $H$ is to be allocated among $n$ agents $N:=\{1,...,n\}$.  Generic objects are $x,y,...$ and generic agents $i,j$. Agents have unit demand and privately known strict preferences over the set of objects. We assume that all objects are acceptable to all agents.  We denote by $\Rcal$ the set of strict preference relations on $H$, i.e., complete, transitive, and antisymmetric binary relations. The generic preference of agent~$i$ is $R_i$ and the generic preference profile is $R=(R_i)_{i\in N}\in \Rcal^N$. The object ranked $k$-highest in $R_i$ is $R_i^k$. With this notation, the best object for $R_i$ is $R_i^n$, the second highest $R_i^{n-1}$, and so on.   The top ranked object of $R_i$ in a subset of objects $H'\subseteq H$ is $\tops(R_i,H')$. We denote the strict part of $R_i$ by $P_i$. For any group of agents $S\subseteq N$, we denote the preferences of the agents in $S$ by $R_S$.

An matching $\mu$ is a bijection between $N$ and $H$. The object assigned to agent~$i$ under $\mu$ is $\mu_i$. Let $\cal M$ be the set of all possible matching. Whenever convenient we describe a matching in its set theory form as a subset of $N\times H$, $\{(i,x),(j,y),...\}$. A submatching is a subset of a matching. The set of submatchings not in~$\Mcal$ is~$\Scal$. The generic submatching is $\nu$. The agents and objects matched by $\nu$ are~$N_\nu$ and~$H_\nu$, respectively. The complements of these sets are
$N^c_\nu:=N\setminus N_\nu$ and $H^c_\nu:=H\setminus H_\nu$.  A mechanism is a function from $\Rcal^N$ to $\cal M$. The generic mechanism is~$f$. Agent~$i$'s allotment at $R$ is $f(R)_i$.

\subsection{Axioms}

Our central axiom requires mechanisms to treat agents symmetrically from the point of view of a designer who selects the mechanism and considers agents preferences to be all equally probable.

A mechanism $f$ is \textbf{balanced} if for each pair of agents $i$ and $j$, and each $k\in \{1,...,n\}$, the number of profiles in $\Rcal^N$ for which agent $i$ receives her $k$-th ranked object is the same as that of agent~$j$, i.e., \[\left|\left\{R\in\Rcal^N: f(R)_i=R_i^k \right\}\right|=\left|\left\{R\in\Rcal^N: f(R)_j=R_j^k \right\}\right|.\]

We consider two additional basic axioms.

A mechanism $f$ is \textbf{efficient} if for each $R\in\Rcal^N$, $f(R)$ is  an \textbf{efficient} matching, i.e., there is no other matching that each agent finds at least as good as $f(R)$ and at least one agent prefers.

A mechanism $f$ is \textbf{group strategy-proof} if no group of agents can gain by misreporting their preferences in the direct revelation game associated with the mechanism. That is, for each $S\subseteq N$ and each $R\in\Pcal^N$, there is no $R_S'\in\Pcal^S$ such that for each $i\in S$, $f(R_S',R_{-S})_i\,R_i\,f(R)_i$ and there is $j\in S$ such that $f(R_S',R_{-S})_j\,P_j\,f(R)_j$.

\section{Results}\label{Sec:Results}

We characterize the set of balanced mechanisms that are efficient and group strategy-proof. When the number of objects is different from three, these mechanisms are the Top Trading Cycles from individual endowments. When there are exactly three objects, besides these mechanisms, an additional class known as TC mechanisms with three brokers also satisfy these properties.

We first define the mechanisms involved in the statement in our characterization. We then state the theorem and prove it in a series of lemmas.

\textbf{TTC from individual endowments \citep{Shapley-Scarf-1974}}: An endowment profile is a bijection $\omega:N\rightarrow H$, where agent~$i$'s endowment is~$\omega_i$. The TTC mechanism from individual endowment $\omega$, $TTC_\omega$, assigns to each agent the allotment calculated by means of the following algorithm.\footnote{The TTC algorithm was attributed to David Gale by \citet{Shapley-Scarf-1974}.} At the first step construct a graph in which each agent points to her most preferred object, and each object points to its owner. Since the number of agents and objects is finite, the graph has at least a cycle. $TTC_\omega$ assigns each agent in a cycle her most preferred object.  In the subsequent steps consider the set of agents who have not received an allotment yet. Again form the graph in which each agent points to her most preferred object still to be allocated. Agents in a cycle of this graph receive their most preferred object in the remaining objects. The algorithm continues until all agents have received an allotment.\footnote{The TTC algorithm can be run clearing all cycles that form in a step. It can also be run selecting a single cycle from those that form in a given step. The order in which the cycles are cleared does not change the outcome of the algorithm. We use this second version of the algorithm in our analysis, for it allows us to associate steps and cycles.}

\textbf{TC mechanisms with three brokers \citep{Bade-MOR-2020}}: These mechanisms are defined when there are exactly three agents and three objects. A brokerage profile is a bijection $b:N\rightarrow H$. The TC mechanism with brokerage profile $b$, $TC_b$, allocates objects efficiently while trying to minimize the number of brokers who receive the object they broker. For each $R$, let $M_b(R)$ be the set of efficient matchings that minimize the number of brokers who receive the object they broker. Note that for each $i\in N$ and each $x\in H$, there is at most an element in $M_b(R)$ in which agent $i$ receives~$x$. Thus, for a given agent, there is a unique best matching and a unique worst matching in $M_b(R)$. Now, if $M_b(R)$ is not a singleton, there is an agent who brokers an object that is at the top of the preference of at least two agents, for otherwise there would be a unique efficient matching for $R$. Then, we can divide the definition of $TC_b$ in three exclusive cases.
\begin{enumerate}
\item $M_b(R)$ is a singleton: $TC_b(R)$ is the unique element in this set.
  \item $M_b(R)$ is not a singleton, the top object of agents $i$ and $j$ is $b_i$, and the top object of agent $k$ is different from $b_i$: $TC_b(R)$ is the worst matching in $M_b(R)$ for agent~$i$.
  \item $M_b(R)$ is not a singleton and the top object of agents $j$ and $k$ is $b_i$: $TC_b(R)$ is the best matching in $M_b(R)$ for agent~$i$.
\end{enumerate}

We can now state our main result.

\begin{theorem}\rm\label{Th:Main}
A mechanism $f$ is balanced, efficient, and group strategy-proof if and only if it is a TTC mechanism from individual endowments or a TC mechanism with three brokers.
\end{theorem}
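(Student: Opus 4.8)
The plan is to prove the two implications separately; the ``if'' direction reduces to a symmetry argument, while the ``only if'' direction carries the real work. For ``if'', efficiency and group strategy-proofness of the $TTC_\omega$ mechanisms and of the $TC_b$ mechanisms are established in \citet{Shapley-Scarf-1974}, \citet{Pycia-Unver-TE-2017}, and \citet{Bade-MOR-2020}, so only balancedness needs checking, and I would do this by equivariance. Fix an endowment $\omega$ and a permutation $\sigma$ of $N$, and set $\pi:=\omega\circ\sigma\circ\omega^{-1}$, a permutation of $H$. Let $\Phi_\sigma:\Rcal^N\to\Rcal^N$ be the bijection sending a profile $R$ to the profile in which agent $\sigma(i)$ holds the preference obtained from $R_i$ by renaming each object $x$ as $\pi(x)$. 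Since in the $TTC_\omega$ algorithm object $x$ points to its owner $\omega^{-1}(x)$ and $\omega^{-1}(\pi(x))=\sigma(\omega^{-1}(x))$, the pointing graph associated with $(\Phi_\sigma(R),\omega)$ is, step by step, the image under the relabeling $(\sigma,\pi)$ of the graph associated with $(R,\omega)$; hence $TTC_\omega(\Phi_\sigma(R))_{\sigma(i)}=\pi\bigl(TTC_\omega(R)_i\bigr)$ for all $i$ and $R$. Therefore $\Phi_\sigma$ carries $\{R:TTC_\omega(R)_i=R_i^k\}$ bijectively onto $\{R:TTC_\omega(R)_{\sigma(i)}=R_{\sigma(i)}^k\}$, and taking $\sigma$ to transpose $i$ and $j$ yields balancedness. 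The same argument applies to $TC_b$ with $\pi:=b\circ\sigma\circ b^{-1}$, once one checks that $M_b(R)$ and the three-case selection in the definition of $TC_b$ are covariant under $(\sigma,\pi)$; this holds because they are phrased purely in terms of ``the broker of a given object,'' ``the top object of a given agent,'' and cardinalities of sets of matchings, all of which commute with relabeling.

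For ``only if'' I would invoke the characterization of \citet{Pycia-Unver-TE-2017} and \citet{Bade-MOR-2020}: any efficient and group strategy-proof $f$ is a trading-cycles mechanism, determined by a consistent family of control-rights assignments $(c_\nu)_{\nu\in\Scal}$ in which, at each submatching $\nu$, every object in $H^c_\nu$ is either owned or brokered by an agent in $N^c_\nu$, subject to the consistency conditions of those papers; the task is then to show that balancedness collapses this structure onto one of the two templates. I would organize this in three steps. First, examine the control rights at the empty submatching: using balancedness — in particular comparing across agents the number of profiles at which an agent receives her top object and the number at which she receives her worst object — argue that a brokered object at the root is possible only when $n=3$, in which case the consistency conditions force every object to be brokered, so $c_\emptyset$ is a brokerage profile $b$; then use group strategy-proofness and efficiency together with balancedness to pin down the resolution of ties and conclude $f=TC_b$. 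Second, in the complementary case $c_\emptyset$ is an endowment $\omega$, and I would show by induction on $n$ that the hierarchical revisions are trivial — after the first cycle clears, yielding a submatching $\nu$ with $H_\nu=\omega(N_\nu)$, inherited ownership $\omega|_{N^c_\nu}$ is well defined, and any other assignment of ownership on the residual problem would make the top-rank counts of two agents differ — so that recursively $f=TTC_\omega$. Third, collect the two cases; for $n\le 2$ the statement reduces to a direct check.

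The main obstacle is the passage, in the ``only if'' direction, from the global counting condition of balancedness to local restrictions on the combinatorially intricate trading-cycles structure. The two hard points are (a) ruling out brokerage outside the single size-three configuration, which requires exhibiting explicit families of profiles on which a broker's induced rank distribution provably departs from a non-broker's, and (b) ruling out the ``rotated-ownership'' hierarchical exchange mechanisms of \citet{Papai-2000a}, which are efficient and group strategy-proof but not balanced, i.e., showing that every revision of ownership must be inherited from $\omega$. The three-agent case is delicate in both directions at once: besides showing that brokerage cannot propagate below the root, one must verify that the specific, asymmetric-looking tie-breaking in cases 2 and 3 of the definition of $TC_b$ is exactly what balancedness forces given efficiency and group strategy-proofness, so that no further variant of these mechanisms is missed.
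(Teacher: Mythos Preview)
Your ``if'' direction is essentially the paper's: the paper constructs, for each pair $i,j$, the explicit bijection on profiles that swaps $\omega_i$ and $\omega_j$ in every agent's preference and also swaps the roles of $i$ and $j$; your $\Phi_\sigma$ with $\sigma=(i\ j)$ is the same map, stated more abstractly. For $TC_b$ the paper takes a slightly different route (show any two brokerage profiles give rank-isomorphic mechanisms, then compute the counts $144/48/24$ for one), but your equivariance argument would also work.

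Your ``only if'' direction, however, misses the two structural facts from \citet{Pycia-Unver-TE-2017}/\citet{Bade-MOR-2020} that do almost all the work, and is missing the key analytic lemma.

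First, those papers show that at the empty submatching there are either $0$, $1$, or $3$ brokers, and the $3$-broker case forces $n=3$ with $f=TC_b$ exactly as defined. So your plan to ``pin down the resolution of ties'' in $TC_b$ via balancedness is unnecessary: the tie-breaking is already forced by group strategy-proofness. Second, ownership persistence says that if $i$ owns $x$ at $\nu$, then $i$ owns $x$ at every larger submatching in which $i$ is unmatched. Consequently, once $c_\emptyset$ assigns each agent exactly one owned object, the mechanism is $TTC_\omega$ outright: at every step the remaining agents and remaining objects are in bijection via $\omega$, and persistence leaves no room for any revision. Your ``hard point (b)'' and the proposed induction on $n$ through residual problems are therefore not needed, and the inductive step you sketch (deducing balancedness of a residual mechanism from balancedness of $f$) is not obviously valid anyway.

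The substantive gap is the single-broker case. The paper's argument is: compare $TC$ (one broker $i$ with brokered object $h_i$, all others owners) with the $TTC$ obtained by converting $i$'s brokerage to ownership; show directly that $\{R:TC(R)_i=R_i^n\}\subsetneq\{R:TTC(R)_i=R_i^n\}$; and then invoke the lemma that for any two efficient, group strategy-proof mechanisms $f,g$ and any $k$, $\sum_i|\{R:f(R)_i=R_i^k\}|=\sum_i|\{R:g(R)_i=R_i^k\}|$. This lemma is the crux and is proved from \citet{Bade-MOR-2020}'s theorem that all symmetrizations of efficient, group strategy-proof mechanisms coincide. Combined with balancedness of $TTC$, it yields some $j$ with $|\{R:TC(R)_j=R_j^n\}|>|\{R:TC(R)_i=R_i^n\}|$. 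Your proposal to ``exhibit explicit families of profiles on which a broker's induced rank distribution provably departs from a non-broker's'' would have to reproduce this conclusion without the symmetrization lemma, and you give no indication how; that is the missing idea.
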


We first prove that TTC from individual endowments and TC mechanisms with three brokers are balanced. It is known that these mechanisms satisfy the other properties \citep{Shapley-Scarf-1974,Roth-1982,Bird-1984,Pycia-Unver-TE-2017,Bade-MOR-2020}.

\begin{lemma}\rm\label{Lem:balanced}TTC mechanisms from individual endowments and TC mechanisms with three brokers are balanced.
\end{lemma}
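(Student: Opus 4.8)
The plan is to exploit a symmetry argument rather than compute the relevant counts directly. Fix a mechanism $f$ that is either $TTC_\omega$ for some endowment profile $\omega$, or $TC_b$ for some brokerage profile $b$. For a pair of agents $i,j$ I want to exhibit a bijection $\varphi:\Rcal^N\to\Rcal^N$ such that, for every $R\in\Rcal^N$ and every rank $k$, agent $i$ receives her $k$-th ranked object at $\varphi(R)$ if and only if agent $j$ receives his $k$-th ranked object at $R$. Such a bijection immediately equates the cardinalities in the definition of balancedness. The natural candidate for $\varphi$ is built from two ingredients: a permutation $\pi$ of the agent set $N$ that swaps $i$ and $j$ (and does something appropriate to the rest), and a permutation $\sigma$ of the object set $H$ chosen so that $\pi$ and $\sigma$ together map the ``type'' of the mechanism to itself. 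Concretely, for $TTC_\omega$ one wants $\sigma$ to satisfy $\sigma\circ\omega=\omega\circ\pi$, i.e.\ $\sigma$ carries agent $\pi(\ell)$'s endowment to agent $\ell$'s endowment; for $TC_b$ one wants the analogous relation with $b$. Then define $\varphi(R)$ by relabeling agents via $\pi$ and objects via $\sigma$: $\varphi(R)_{\pi(\ell)}$ is the preference that ranks $\sigma(x)$ exactly where $R_\ell$ ranks $x$.

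The key step is then an \emph{equivariance} claim: $f(\varphi(R))_{\pi(\ell)}=\sigma(f(R)_\ell)$ for all $\ell$. For $TTC_\omega$ this is essentially immediate from the description of the algorithm — relabeling agents and objects consistently with the endowment relabels the entire sequence of pointing graphs and cycles, so the computed allocation is relabeled the same way. For $TC_b$ one argues similarly: the set $M_b(R)$ of efficient brokerage-minimizing matchings is carried by $(\pi,\sigma)$ to $M_{b'}(\varphi(R))$ where $b'=\sigma\circ b\circ\pi^{-1}$, and since we chose $\sigma$ so that $b'=b$, the tie-breaking in the three cases (identifying the broker of the most popular object, and whether that object is demanded by one or two other agents) is preserved under the relabeling, because ``most popular object,'' ``best/worst matching for a given agent,'' and ``how many agents top-rank it'' are all relabeling-invariant notions. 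Once equivariance holds, note that $f(R)_\ell=R_\ell^k$ (agent $\ell$ gets her $k$-th choice at $R$) translates under $\varphi$ to $\sigma(f(R)_\ell)$ being ranked $k$-th in $\varphi(R)_{\pi(\ell)}$ — which by equivariance says $f(\varphi(R))_{\pi(\ell)}$ is the $k$-th choice of $\pi(\ell)$ at $\varphi(R)$. Taking $\ell=j$ and recalling $\pi(j)=i$ gives exactly the rank-by-rank correspondence between the two count-sets, and $\varphi$ is a bijection since $\pi$ and $\sigma$ are.

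The one genuine obstacle is the \emph{existence} of a suitable object permutation $\sigma$ with $\sigma\circ\omega=\omega\circ\pi$ (resp.\ for $b$). Given any agent permutation $\pi$, the assignment $\sigma:=\omega\circ\pi\circ\omega^{-1}$ is a well-defined bijection on $H$ and satisfies the requirement by construction, so in fact \emph{any} $\pi$ (in particular the transposition of $i$ and $j$) works for $TTC_\omega$, and likewise $\sigma:=b\circ\pi\circ b^{-1}$ works for $TC_b$. So there is no real obstruction here; the substance of the proof is purely the equivariance verification, which for $TTC$ is a routine induction on the steps of the algorithm and for $TC$ is a short case-check that each clause in the definition of $TC_b$ is phrased in relabeling-invariant language. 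I would write the $TTC$ case in full as the induction, then remark that the $TC$ case follows by the same relabeling principle once one observes each of the three defining cases is invariant.
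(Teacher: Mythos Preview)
Your proposal is correct. For $TTC_\omega$, your bijection $\varphi$ built from the transposition $\pi=(i\,j)$ and $\sigma=\omega\circ\pi\circ\omega^{-1}$ is exactly the paper's bijection $\tau$ (swap $\omega_i$ and $\omega_j$ in every agent's preference, and swap the preference lists of $i$ and $j$); the paper then spends most of its space carrying out in detail the step-by-step case analysis that you summarize as ``a routine induction on the steps of the algorithm.'' So for $TTC$ the two proofs coincide.

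Where you genuinely diverge is the $TC$ case. You propose to reuse the same equivariance argument with $\sigma=b\circ\pi\circ b^{-1}$, checking that each of the three defining clauses of $TC_b$ is phrased in relabeling-invariant terms; this is valid and gives a uniform treatment of both mechanism families. The paper instead takes a two-step detour: first it shows (via an object-relabeling bijection) that if $TC_b$ is balanced for one brokerage profile then it is balanced for every brokerage profile, and then it verifies balancedness for a single profile by direct enumeration, reporting the counts $144$, $48$, $24$ for the top, second, and bottom ranks respectively. Your route is cleaner and avoids the brute-force count; the paper's route has the minor virtue of exhibiting the actual numbers, which are not otherwise needed in the paper.
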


\begin{proof}Consider first a TTC mechanism from individual endowments.
Let $\omega$ be an endowment profile. Fix two different agents $i$ and $j$ in $N$.  For any preference profile $R\in \Rcal^N$, suppose $TTC_\omega(R)_i=R_i^r$, and $TTC_\omega(R)_j = R_j^l$. Construct a bijection $\tau: \Rcal^N\to \Rcal^N$ such that for each agent $h\in N\setminus\{i,j\}$, swap the positions of $\omega_i$ and $\omega_j$ in $R_h$ to form $\tau(R)_h$, while $\tau(R)_i$ and $\tau(R)_j$ are formed by swapping the positions of $\omega_i$ and $\omega_j$ in $R_j$ and $R_i$, respectively. We will show that agent~$i$ receives her $l$-th ranked object and agent~$j$ receives her $r$-th ranked object at $\tau(R)$. Consequently the mechanism is balanced.

Consider the TTC algorithm at $R$. Let $t=1,...,m$ be the steps in the algorithm. Suppose without loss of generality that there is a minimal trading cycle at each step, i.e., a cycle with no proper subcycle. Let $N_t$ be the set of agents in the cycle at step $t$ and for $t\geq 2$, $H_{t}:= H\setminus \{TTC_\omega(R)_k:k\in N_1\cup\dots\cup N_{t-1}\}$. Suppose that agent $i$ gets an object at step $t_i$ and agent $j$ gets one at step $t_j$. Suppose without loss of generality that $t_i\leq t_j$.

We calculate $TTC_\omega(\tau(R))$. Suppose first that $1<t_i$. Since $N_1$ and their endowments form a cycle at step 1 at profile $R$, for each $k\in N_1$, $\tops(R_k,H)\,P_k\,\{\omega_i,\omega_j\}$. Then, for each $k\in N_1$, $\tops(\tau(R)_k,H)=\tops(R_k,H)$ and $\tops(R_k,H)\,\tau(P_k)\,\{\omega_i,\omega_j\}$. Thus, agents $N_1$ and their endowments form a cycle at the first step of $TTC_\omega$ at profile $\tau(R)$. We select this to be the first cycle in the construction of $TTC_\omega(\tau(R))$. In the trading cycle, each agent $k\in N_1$ points to $\tops(R_k,H)$. Thus, for each $k\in N_1$, $TTC_\omega(\tau(R))_k=TTC_\omega(R)_k$. Let $T\leq t_i-2$. Suppose that we have completed the first $t\leq T$ steps in the construction of $TTC_\omega(\tau(R))$ and at each such step we proved that $N_t$ forms a cycle of the algorithm that calculates $TTC_\omega(\tau(R))$ and for each $k\in N_t$, $TTC_\omega(\tau(R))_k=TTC_\omega(R)_k$. Let $k\in N_{T+1}$. Since $N_{T+1}$ and their endowments form a cycle at step $T+1$ at profile $R$, for each $k\in N_{T+1}$, $\tops(R_k,H_{T+1})\,P_k\,\{\omega_i,\omega_j\}$. Then, for each $k\in N_{T+1}$, $\tops(\tau(R)_k,H_{T+1})=\tops(R_k,H_{T+1})$ and $\tops(R_k,H_{T+1})\,\tau(P_k)\,\{\omega_i,\omega_j\}$. Note that $H_{T+1}= H\setminus$\linebreak $\{TTC_\omega(\tau(R))_k:k\in N_1\cup\dots\cup N_t\}$.  Thus, agents $N_{T+1}$ and their endowments form a cycle at step $T+1$ of the algorithm that calculates $TTC_\omega(\tau(R))$. Each agent $k\in N_{T+1}$ points to $\tops(R_k,H_{T+1})$. Thus, for each $k\in N_{T+1}$, $TTC_\omega(\tau(R))_k=TTC_\omega(R)_k$.

\begin{center}
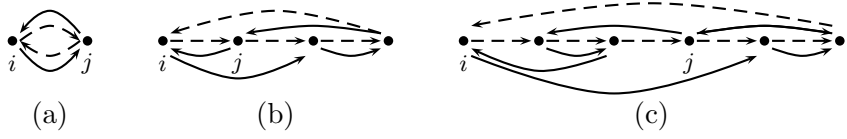
\begin{figure}[t]
\centering
\begin{pspicture}(0,-.5)(11,1.5)
\psdots(0,1)(1,1)
\rput[c](0,.7){$\mbox{\footnotesize$i$}$}
\pscurve[linestyle=dashed]{->}(0.1,1)(0.5,1.2)(.9,1)
\pscurve[linestyle=dashed]{->}(.9,1)(0.5,.8)(0.1,1)
\pscurve{<-}(0.1,1.1)(0.5,1.4)(.9,1.1)
\pscurve{<-}(.9,.9)(0.5,.6)(0.1,.9)
\rput[c](1,.7){$\mbox{\footnotesize$j$}$}
\rput[c](0.5,0){(a)}
\psdots(2,1)(3,1)(4,1)(5,1)
\psline[linestyle=dashed]{->}(2.1,1)(2.9,1)
\psline[linestyle=dashed]{->}(3.1,1)(3.9,1)
\psline[linestyle=dashed]{->}(4.1,1)(4.9,1)
\pscurve{->}(4.1,0.9)(4.5,0.8)(4.9,0.9)
\rput[c](2,.7){$\mbox{\footnotesize$i$}$}
\rput[c](3,.7){$\mbox{\footnotesize$j$}$}
\pscurve{->}(4.9,1.1)(4,1.2)(3.1,1.1)
\pscurve[linestyle=dashed]{->}(4.9,1.1)(3.5,1.4)(2.1,1.1)
\pscurve{<-}(2.1,0.9)(2.5,0.8)(2.9,0.9)
\pscurve{->}(2.1,0.8)(3,0.5)(3.9,0.8)
\rput[c](3.5,0){(b)}
\psdots(6,1)(7,1)(8,1)(9,1)(10,1)(11,1)
\psline[linestyle=dashed]{->}(6.1,1)(6.9,1)
\psline[linestyle=dashed]{->}(7.1,1)(7.9,1)
\psline[linestyle=dashed]{->}(8.1,1)(8.9,1)
\psline[linestyle=dashed]{->}(9.1,1)(9.9,1)
\psline[linestyle=dashed]{->}(10.1,1)(10.9,1)
\pscurve{->}(7.1,0.9)(7.5,0.8)(7.9,0.9)
\pscurve{->}(10.1,0.9)(10.5,0.8)(10.9,0.9)
\pscurve{<-}(6.1,0.9)(7,0.6)(7.9,0.8)
\rput[c](6,.7){$\mbox{\footnotesize$i$}$}
\rput[c](9,.7){$\mbox{\footnotesize$j$}$}
\pscurve{->}(6.1,0.8)(8,0.3)(9.9,0.8)
\pscurve{->}(9.1,1.1)(10,1.2)(10.9,1.1)
\pscurve{<-}(9.1,1.1)(10,1.2)(10.9,1.1)
\pscurve{<-}(7.1,1.1)(8,1.2)(8.9,1.1)
\pscurve[linestyle=dashed]{->}(10.9,1.2)(8.5,1.5)(6.1,1.2)
\rput[c](8.5,0){(c)}
\end{pspicture}
\caption{Cycle $N_{t+1}$ in $TTC_\omega(R)$ and $TTC_\omega(\tau(R))$. Each agent points to the agent who owns her most preferred remaining object. Dashed arrows correspond to $TTC_\omega(R)$ and solid lines to  $TTC_\omega(\tau(R))$. }\label{Figure-cycles}
\end{figure}
\end{center}
\vspace{-1cm}

Suppose first that $t=t_i=t_j$. There are three cases.

Case 1.a: $TTC_\omega(R)_i=\omega_j$ and $TTC_\omega(R)_j=\omega_i$. Then, $\tops(R_i,H_{t})=\omega_j$ and $\tops(R_j,H_{t})=\omega_i$. By definition of $\tau(R)$,  $\tops(\tau(R)_i,H_{t})=\omega_j$ and $\tops(\tau(R)_j,H_{t})=\omega_i$ (Fig.~\ref{Figure-cycles} (a)).

Case 1.b:  $TTC_\omega(R)_i=\omega_j$ and $TTC_\omega(R)_j\neq \omega_i$. Then, $\tops(\tau(R)_j,H_{t})=TTC_\omega(R)_j \not\in\{\omega_i,\omega_j\}$ and $\tops(\tau(R)_j,H_{t})=\omega_i$.  Moreover, the agents in $N_t$ and their endowments form a trading cycle in which agent $i$ receives $TTC_\omega(R)_j$ and agent $j$ receives $\omega_i$ (Fig.~\ref{Figure-cycles} (b)).

Case 1.c: $TTC_\omega(R)_i\neq \omega_j$ and $TTC_\omega(R)_j\neq \omega_i$. Then, $\tops(\tau(R)_i,H_{t})=TTC_\omega(R)_j\not\in\{\omega_i,\omega_j\}$ and $\tops(\tau(R)_j,H_{t})=TTC_\omega(R)_i\not\in\{\omega_i,\omega_j\}$.  Moreover, the agents in $N_t$ and their endowments form a trading cycle in which agent $i$ receives $TTC_\omega(R)_j$ and agent $j$ receives $TTC_\omega(R)_i$ (Fig.~\ref{Figure-cycles} (c)).

Thus, we can always select a trading cycle at step $t$ of the algorithm that calculates $TTC_\omega(\tau(R))$, in which agent~$i$ receives her $l$-th ranked object and agent~$j$ receives her $r$-th ranked object.

Suppose now that $t=t_i<t_j$. There are two cases.

Case 2.a: $TTC_\omega(R)_i=\omega_i$. Then, $N_t=\{i\}$ and $\tops(R_i,H_{t})=\omega_i$. Then,\linebreak $\tops(\tau(R)_j,H_t)=\omega_j$. We can select the cycle in which agent $j$ points to her own object as the cycle at step $t$ of the algorithm that calculates $TTC_\omega(\tau(R))$.

Case 2.b: $TTC_\omega(R)_i\neq \omega_i$. Then, at the $t$-th step of the algorithm that calculates $TTC_\omega(\tau(R))$ the agents in $\{j\}\cup (N_t\setminus\{i\})$ and their objects form a cycle in which agent $j$ points to $TTC_\omega(R)_i$.

Thus, we can always select a trading cycle at step $t$ of the algorithm that calculates $TTC_\omega(\tau(R))$, in which agent~$j$ receives her $r$-th ranked object.

Observe that for each $t_i<t<t_j$ and each $k\in N_t$, $\tops(\tau(R)_k,\{\omega_i\}\cup (H_t\setminus\{\omega_j\}))=\tops(R_k,H_t)$.  Thus, for each such $t$ the agents $N_{t}$ and their endowments form a cycle at step $t$ of the algorithm that calculates $TTC_\omega(\tau(R))$. Moreover, each agent $k\in N_{t}$ points to $\tops(R_k,H_{t})$. Thus, for each $k\in N_{t}$, $TTC_\omega(\tau(R))_k=TTC_\omega(R)_k$.

Finally, let $t=t_j$. There are two cases.

Case 3.a: $TTC_\omega(R)_j=\omega_j$. Then, $N_t=\{j\}$ and $\tops(R_j,H_{t})=\omega_j$. Then,\linebreak $\tops(\tau(R)_i,H_t)=\omega_i$. We can select the cycle in which agent $i$ points to her own object as the cycle at step $t$ of the algorithm that calculates $TTC_\omega(\tau(R))$.

Case 3.b: $TTC_\omega(R)_j\neq \omega_j$. Then, at the $t$-th step of the algorithm that calculates $TTC_\omega(\tau(R))$ the agents in $\{i\}\cup (N_t\setminus\{j\})$ and their objects form a cycle in which agent $i$ points to $TTC_\omega(R)_j$.

Thus, we can always select a trading cycle at step $t$ of the algorithm that calculates $TTC_\omega(\tau(R))$, in which agent~$i$ receives her $l$-th ranked object.

Now, consider TC mechanisms with three brokers. These mechanisms are defined for the three-agent case only. Suppose without loss of generality that $N=\{1,2,3\}$. Let $b$ and $c$ two brokerage structures. We claim first that if $TC_b$ is balanced, so is $TC_c$. Let $\pi:=b\circ c^{-1}:H\rightarrow H$. Since both $b$ and $c$ are bijections, $\pi$ is a bijection. Moreover, $\pi^{-1}=c\circ b^{-1}$. For each $R\in \Rcal^N$ let $\tau(R)$ be the defined as follows: for each $i\in N$ and $\{x,y\}\subseteq H$, $x\,\tau(R)_i\,y$ if and only if $\pi(x)\,R_i\,\pi(y)$. Thus, for each $i\in N$ and $\{x,y\}\subseteq H$, $x\,R_i\,y$ if and only if $\pi^{-1}(x)\,\tau(R)_i\,\pi^{-1}(y)$. Since $\pi$ is a bijection, so is $\tau:\Rcal^N\rightarrow\Rcal^N$. We have that:

(i) $\pi^{-1}$ and $\pi$ preserve the rankings of $R$ and $\tau(R)$, respectively. That is, for $i\in N$ and $k\in\{1,2,3\}$, $R_i^k=\pi(\tau(R)_i^k)$ and $\pi^{-1}(R_i^k)=\tau(R)_i^k$.

(ii) $\mu\in\Mcal$ is efficient for $R$ if and only if $\pi^{-1}\circ\mu$ is efficient for $\tau(R)$.

(iii) For $i\in N$ and $\mu\in\Mcal$,  $\mu_i=b_i$ if and only if $(\pi^{-1}\circ \mu)_i=c\circ b^{-1}(b_i)=c_i$. Thus, for each $\mu\in\Mcal$, $|\{i:\mu_i=b_i\}|=|\{i:(\pi^{-1}\circ \mu)_i=c_i\}|$.

We claim that $TC_c(\tau(R))=\pi^{-1}\circ TC_b(R)$. By (ii) and (iii) $M_b(R)=\pi(M_b(\tau(R)))$ and $\pi^{-1}(M_b(R))=M_b(\tau(R))$.
If $M_b(R)$ is a singleton, so is $M_b(\tau(R))$ with unique element $\pi^{-1}\circ TC_b(R)$.

By (i), if for profile $R$ the top object of agents $i$ and $j$ is $b_i$, and the top object of agent $k$ is different from $b_i$, then for profile $\tau(R)$ the top object of agents $i$ and $j$ is $c_i$, and the top object of agent $k$ is different from $c_i$. Thus, if $TC_b(R)$ is calculated as the worst matching of agent $i$ in $M_b(R)$, we have that $TC_c(\tau(R))$ is calculated as the worst matching of agent $i$ in $M_b(\tau(R))$. By (i) the worst matching of $\tau(R)_i$ in $M_b(\tau(R))$ is the composition of $\pi^{-1}$ and the worst matching of $R_i$ in $M_b(R)$. Thus, $TC_c(\tau(R))=\pi^{-1}\circ TC_b(R)$.

By (i) if for profile $R$ the top object of agents $j$ and $k$ is $b_i$, then for profile $\tau(R)$ the top object of agents $j$ and $k$ is $c_i$. Thus, if $TC_b(R)$ is calculated as the best matching of agent $i$ in $M_b(R)$, we have that $TC_c(\tau(R))$ is calculated as the best matching of agent $i$ in $M_b(\tau(R))$. By (i) the best matching of $\tau(R)_i$ in $M_b(\tau(R))$ is the composition of $\pi^{-1}$ and the best matching of $R_i$ in $M_b(R)$. Thus, $TC_c(\tau(R))=\pi^{-1}\circ TC_b(R)$.

Since $TC_c(\tau(R))=\pi^{-1}\circ TC_b(R)$ by (i), for each $i\in N$, the ranking of $TC_b(R)_i$ in $R_i$ is the same as the ranking of $TC_c(\tau(R))_i$ in $\tau(R)_i$. Since $\tau$ is a bijection, if $TC_b$ is balanced, so is $TC_c$.

One can calculate that for a given brokerage structure:
\begin{align*}
\left|\left\{R\in\Rcal^N:TC_b(R)_i=R_i^3 \right\}\right|=144, \ \forall i\in \{1,2,3\}. \\
\left|\left\{R\in\Rcal^N:TC_b(R)_i=R_i^2 \right\}\right|=48, \ \forall i\in \{1,2,3\}. \\
\left|\left\{R\in\Rcal^N:TC_b(R)_i=R_i^1 \right\}\right|=24, \ \forall i\in \{1,2,3\}.
\end{align*}
\end{proof}

To prove the other direction of Theorem~\ref{Th:Main}, we need to consider the whole space of efficient and group strategy-proof mechanisms. These mechanisms have been described by \citet{Pycia-Unver-TE-2017} and \citet{Bade-MOR-2020}. We refer the reader to these papers for the complete description of all mechanisms in this family. This extensive detail is not necessary to prove our theorem. For our purpose it is sufficient to describe the general algorithm by which they are defined and the parameters that define it.

An \textbf{inheritance structure} $c$ assigns to each $\nu\in\Scal$ and each $x\in H^c_\nu$ a pair $c_\nu(x)\in N^c_\nu\times \{o,b\}$. If $c_\nu(x)=(i,o)$ we say that agent $i$ owns $x$ at $\nu$. If $c_\nu(x)=(i,b)$ we say that agent $i$ brokers $x$ at $\nu$.  The \textbf{TC mechanism associated with inheritance structure} $c$, $TC_c$, is defined by means of the following owner-and-broker algorithm. The algorithm starts from the empty matching. If there are three brokers, it determines the assignment with a TC mechanism with three brokers. If there are less than three brokers, the algorithm proceeds as follows. It constructs a graph in which agents who are owners point to their best object and agents who are brokers point to their best object that they do not broker. An object points to the agent who controls it. Cycles of this graph determine a submatching (agents receive the object they point to in a cycle). The partial matching, say $\nu$, determines the ownership rights in the next round $c_\nu$. For this algorithm to define a mechanism and for this mechanism to be efficient and group strategy-proof, it is necessary that $c$ satisfies further properties. Let $\Ccal$ be the set of inheritance structures that define an efficient and group strategy-proof mechanism. Two of these properties are relevant to our results.
\begin{enumerate}
  \item \textbf{Initial brokerage limits}. If there is more than one broker at the first step of the algorithm, the mechanism is a TC mechanism with three brokers. This implies that there are only three possible brokerage scenarios at the outset of the algorithm: (i) there are no brokers; (ii) there is one broker; or (iii) there are three brokers, $n=3$, and the mechanism is a TC mechanism with three brokers.
  \item \textbf{Ownership persistence}. Suppose that a submatching $\nu$ is calculated at some step of the algorithm for a given preference profile. If $i\in N_\nu^c$ owns  $x$ at the step in which a $\nu$ is determined, this ownerships persists, i.e., 
for all $\nu\subseteq \nu'\in {\cal S}$, $i\in N_{\nu'}^c$ implies $c_{\nu'}(x)=(i,o)$.
This implies that if an agent owns an object in some round of the algorithm, she will receive an object that is at least as good as this object.
\end{enumerate}

The next lemma states that if an agent is the owner of two objects at the beginning of the owner-and-broker algorithm, the mechanism is not balanced. Essentially, the issue is that if an agent owns two objects, at least an agent starts without being the owner or broker of an object. An agent who does not control any object at the outset of the algorithm will end up receiving her worst choice in some problems. This is not so for the agent who starts with control of more than two objects.

\begin{lemma}\rm\label{Lem:owner-two-houses}
Let $c\in\Ccal$. If there is an agent who owns more than one object at the first step of owner-and-broker algorithm with structure $c$, $TC_c$ is not balanced.
\end{lemma}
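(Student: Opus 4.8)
The plan is to reach a contradiction with balancedness by counting, for each agent, the number of preference profiles on which that agent is assigned her \emph{worst} object (her $R^1$ object, in the paper's convention where $R_i^1$ denotes the $R_i$-minimum). The decisive asymmetry is that an agent who owns two distinct objects at the outset can \emph{never} be assigned her worst object, while in every constant profile \emph{some} agent is.

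First I would fix the agent $i$ and the two distinct objects $x,y\in H$ supplied by the hypothesis, so that $c_\emptyset(x)=(i,o)$ and $c_\emptyset(y)=(i,o)$, where $\emptyset$ denotes the empty submatching from which the owner-and-broker algorithm starts. (By \textbf{initial brokerage limits} the only first-step scenario with more than one broker is the three-broker TC mechanism, in which each agent brokers exactly one object and owns none; so the hypothesis genuinely refers to ownership, not brokerage.) Invoking \textbf{ownership persistence} --- concretely, the consequence stated in the text that an agent who owns an object at some round ends up with an object at least as good under her own preference --- I get, for every $R\in\Rcal^N$, both $TC_c(R)_i\mathrel{R_i}x$ and $TC_c(R)_i\mathrel{R_i}y$. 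If $TC_c(R)_i$ were equal to $R_i^1$, this would give $R_i^1\mathrel{R_i}x$ and $R_i^1\mathrel{R_i}y$; since $R_i^1$ is the $R_i$-minimum and $R_i$ is antisymmetric, this forces $x=R_i^1=y$, contradicting $x\neq y$. Hence $TC_c(R)_i\neq R_i^1$ for all $R$, so $\bigl|\{R\in\Rcal^N:TC_c(R)_i=R_i^1\}\bigr|=0$.

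Next, assume for contradiction that $TC_c$ is balanced. Applying balancedness with $k=1$ to the pair $(i,h)$ for an arbitrary agent $h$, the vanishing count above forces $\bigl|\{R\in\Rcal^N:TC_c(R)_h=R_h^1\}\bigr|=0$ as well; that is, under $TC_c$ no agent ever receives her worst object. To contradict this I would feed the algorithm a constant profile: pick any $R^*\in\Rcal$ and set $R=(R^*,\dots,R^*)$. Since $TC_c(R)$ is a matching, the object $(R^*)^1$ --- the worst object for the common preference --- is assigned to some agent $h$, and then $TC_c(R)_h=(R^*)^1=R_h^1$, a contradiction. Therefore $TC_c$ is not balanced.

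I do not anticipate a substantive obstacle: the argument only uses the already-granted consequence of ownership persistence together with the elementary fact that in any matching some agent receives the globally worst object of a constant profile. The only steps needing a word of justification are (i) that ownership persistence is being applied at the initial step $\nu=\emptyset$, which is legitimate because $\emptyset\in\Scal$ and $c_\emptyset$ is defined on all of $H$, and (ii) that the hypothesis rules out the three-broker case. An alternative, following the intuition sketched just before the lemma, would first use a pigeonhole count on the controllers to locate an agent $j$ who controls no object at the first step and then build an explicit profile on which $j$ gets her worst object; but that requires tracing the algorithm's dynamics, which the constant-profile argument sidesteps.
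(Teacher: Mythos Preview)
Your proof is correct and follows essentially the same route as the paper's: both use ownership persistence to show the multi-owner never receives her worst object, then feed in a constant profile to exhibit some agent who does, contradicting balancedness at $k=1$. Your write-up is slightly more explicit (you argue directly from $TC_c$ being a bijection rather than invoking efficiency, and you spell out the antisymmetry step), but the idea is identical.
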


\begin{proof}
Let $i\in N$ be such that for two different objects, $x$ and $y$, $c_\emptyset(x)=(i,o)$ and $c_\emptyset(y)=(i,o)$. By ownership persistence, for any preference profile, agent $i$ will not be assigned her worst choice. However, when all agents have the same preferences, efficiency implies that each agent receives a distinct object. Let agent $j$ be the agent who receives her worst choice. Then, we have that agent $i$ never receives her worst object and there is another agent who receives her worst object for some profiles. Thus, $TC_c$ is not balanced.
\end{proof}

It remains to show that ownership and brokerage on average treat agents differently. This seems intuitive, but requires a relatively involved proof. The following lemma is the key to establish this fact. It states that  when all profiles are equally likely, the expected number of agents who receive their top choice, the second top choice, and so on, is the same for any two efficient and group strategy-proof mechanisms.

\begin{lemma}\rm \label{Lm-sum}Let $f$ and $g$ be two efficient and strategy-proof mechanisms. Then for each $k\in\{1,...,n\}$,
\[
\sum_{i\in N} \left|\left\{R\in\Rcal^N:f(R)_i=R_i^k \right\}\right|=\sum_{i\in N} \left|\left\{R\in\Rcal^N:g(R)_i=R_i^k \right\}\right|.\label{Eq*}
\]
\end{lemma}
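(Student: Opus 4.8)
The plan is to reduce the claim to the known equivalence between suitably symmetrized efficient and group strategy-proof mechanisms and Random Priority (random serial dictatorship). Write $N_k(h,R):=|\{i\in N: h(R)_i=R_i^k\}|$ for the number of agents who receive their $k$-th ranked object at $R$ under a mechanism $h$, and $\Phi_k(h):=\sum_{i\in N}|\{R\in\Rcal^N: h(R)_i=R_i^k\}|$. Interchanging the two sums gives $\Phi_k(h)=\sum_{R\in\Rcal^N}N_k(h,R)$, so the lemma is the statement $\Phi_k(f)=\Phi_k(g)$.

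First I would record that $\Phi_k$ is invariant under relabeling of agents and of objects. Given a permutation $\pi$ of $N$, let $f^{\pi}$ be the mechanism obtained from $f$ by renaming agents through $\pi$; it is again efficient and group strategy-proof, and a direct check shows that the rank of $f^{\pi}(R)_i$ in $R_i$ equals the rank of $f(R\circ\pi)_{\pi^{-1}(i)}$ in $(R\circ\pi)_{\pi^{-1}(i)}$, so $N_k(f^{\pi},R)=N_k(f,R\circ\pi)$. Summing over $R$ and using that $R\mapsto R\circ\pi$ is a bijection of $\Rcal^N$ yields $\Phi_k(f^{\pi})=\Phi_k(f)$; the analogous computation for the action of $\mathrm{Sym}(H)$ on profiles shows $\Phi_k$ is unchanged by relabeling objects as well. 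Consequently $\Phi_k(f)$ equals the average of $\Phi_k(f^{\pi})$ over all $\pi$, which rearranges to $\Phi_k(f)=\sum_{R}\mathbb{E}_{\pi}\!\left[N_k(f^{\pi},R)\right]$, i.e.\ $\sum_{R}$ of the expected number of agents obtaining their $k$-th choice at $R$ under the symmetrized random mechanism $\bar f$ that runs $f^{\pi}$ for a uniformly drawn $\pi$.

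Second I would invoke the equivalence result \citep{Knuth-1996,Abdulkadiroglu-Sonmez-1998,Pathak-Sethuraman-2011,Lee-Sethuraman-2011,CARROLL-2014-JME,Bade-MOR-2020}: for every efficient and group strategy-proof mechanism $h$, the symmetrized mechanism $\bar h$ induces, at each profile $R$, the same distribution over matchings as Random Priority. In particular $\mathbb{E}_{\pi}[N_k(h^{\pi},R)]$ depends on $R$ and $k$ only, not on $h$. Applying this to $f$ and to $g$ and summing over $R$ gives $\Phi_k(f)=\sum_R\mathbb{E}_{\pi}[N_k(f^{\pi},R)]=\sum_R\mathbb{E}_{\pi}[N_k(g^{\pi},R)]=\Phi_k(g)$, which is the lemma.

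The real content is thus the symmetrization/Random-Priority equivalence, which I would cite rather than reprove; that is where any difficulty lies. Should one want a self-contained argument, the natural reformulation is the identity $\Phi_k(f)=\sum_{i\in N}\sum_{R_{-i}}c\big(n,k,|O^f_i(R_{-i})|\big)$, where $O^f_i(R_{-i})\subseteq H$ is agent $i$'s option set given the others' reports $R_{-i}$ (well defined by strategy-proofness, $f(R_i,R_{-i})_i$ being always its $R_i$-maximal element) and $c(n,k,m)=\binom{n-m}{n-k}(n-k)!\,m\,(k-1)!$ is a fixed combinatorial weight depending on none of the primitives. Since the weights $m\mapsto c(n,k,m)$, as $k$ ranges over $\{1,\dots,n\}$, are linearly independent functions of $m\in\{1,\dots,n\}$, establishing the family of identities in the lemma is in fact equivalent to showing that the multiset $\{|O^f_i(R_{-i})|:i\in N,\ R_{-i}\in\Rcal^{N\setminus\{i\}}\}$ is the same for all efficient and group strategy-proof $f$ — a statement no easier than the equivalence above, for which I would again rely on the structural descriptions of \citet{Pycia-Unver-TE-2017} and \citet{Bade-MOR-2020}.
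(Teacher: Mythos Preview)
Your proposal is correct and follows essentially the same route as the paper: both arguments invoke \citet{Bade-MOR-2020}'s result that the agent-symmetrized versions of any two efficient and group strategy-proof mechanisms coincide, combine this with the observation that $\Phi_k(f^{\pi})=\Phi_k(f)$ for every permutation $\pi$ of $N$ (since $R\mapsto R\circ\pi$ is a bijection of $\Rcal^N$), and conclude by averaging over $\pi$. Your expectation notation packages the paper's indicator-function bookkeeping more compactly, and your closing remarks on option sets are extra commentary, but the substance is the same.
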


\begin{proof}
Denote the set of all permutations $\pi:N\to N$ by $\Pi$.  Note that $|\Pi|=n!$. Given a mechanism $h:\Rcal^N\rightarrow\Mcal$ and $\pi\in\Pi$, the permuted form of $h$ with respect to $\pi$, denoted by $h^\pi$, is obtained by reassigning the role of agents in $h$ as determined by $\pi$.\footnote{$h^\pi$ corresponds to $\pi\bigodot h$ in \citet{Bade-MOR-2020}.} Formally, for each $R\in\Rcal^N$, let $\tau^\pi(R)\in\Rcal^N$ be the profile where for each $k\in N$, $\tau^\pi(R)_k$ coincides with $R_{\pi(k)}$, i.e., agent $k$ is endowed with the preference of agent $\pi(k)$ in profile $R$. (Note that $\tau^\pi:\Rcal^N\rightarrow \Rcal^N$ is a bijection.) Then, for each $R\in \Rcal^N$ and each $i\in N$, $h^\pi(R)_i:=h(\tau^\pi(R))_{\pi(i)}$.

Given a mechanism $h$ one can define its symmetrized random version by choosing a permutation $\pi$ uniformly at random from $\Pi$ and assigning each agent the corresponding allotment. Formally, the symmetrization of $h$ is the function $h^\Pi:\Rcal^N\rightarrow \Delta(\Mcal)$ defined as follows: for each $\mu\in\Mcal$,
\[h^\Pi(\mu):=|\{\pi\in\Pi:h^\pi(R)=\mu\}|/n!.\]

Fix $k\in\{1,...,n\}$. Theorem 1 in \citet{Bade-MOR-2020} states that any two efficient and group stratety-proof mechanisms induce the same symmetrized random mechanism. Thus, $f^\Pi=g^\Pi$ and  for each $R\in\Rcal^N$ and each $i\in N$,
\[\left|\left\{\pi\in\Pi:f^\pi(R)_i=R_i^k\right\}\right|/n!=
\left|\left\{\pi\in\Pi:g^\pi(R)_i=R_i^k\right\}\right|/n!.
\]
Equivalently,
\[\sum_{\pi\in\Pi}1_{f^\pi(R)_i=R_i^k}(i,R,\pi)=
\sum_{\pi\in\Pi}1_{g^\pi(R)_i=R_i^k}(i,R,\pi),\]
where $1_{f^\pi(R)_i=R_i^k}(i,R,\pi)=1$ if $f^\pi(R)_i=R_i^k$ and zero otherwise.  Thus,
\[\sum_{i\in N}\sum_{R\in\Rcal^N}\sum_{\pi\in\Pi}1_{f^\pi(R)_i=R_i^k}(i,R,\pi)=
\sum_{i\in N}\sum_{R\in\Rcal^N}\sum_{\pi\in\Pi}1_{g^\pi(R)_i=R_i^k}(i,R,\pi).
\]
Since the summations above are finite, we can exchange their order. Thus,
\begin{equation}\label{Eq:Eq2}
\sum_{\pi\in\Pi}\sum_{i\in N}\sum_{R\in\Rcal^N}1_{f^\pi(R)_i=R_i^k}(i,R,\pi)=
\sum_{\pi\in\Pi}\sum_{i\in N}\sum_{R\in\Rcal^N}1_{g^\pi(R)_i=R_i^k}(i,R,\pi).
\end{equation}
Now, fix a permutation $\pi\in\Pi$. By definition of $TC^\pi_c$, for each $R\in\Rcal^N$,
\[|\{i\in N:f^\pi(R)_i=R^k_i\}|=|\{i\in N:f(\tau^\pi(R))_i=\tau^\pi(R)^k_i\}|.\]
Equivalently,
\[\sum_{i\in N}1_{f^\pi(R)_i=R_i^k}(i,R,\pi)=
\sum_{i\in N}1_{f(\tau^\pi(R))_i=\tau^\pi(R)_i^k}(i,R),
\]
where $1_{f(\tau^\pi(R))_i=\tau^\pi(R)_i^k}(i,R)=1$ if $f(\tau^\pi(R))_i=\tau^\pi(R)_i^k$ and zero otherwise. Thus,
\[\sum_{R\in\Rcal^N}\sum_{i\in N}1_{f^\pi(R)_i=R_i^k}(i,R,\pi)=
\sum_{R\in\Rcal^N}\sum_{i\in N}1_{f(\tau^\pi(R))_i=\tau^\pi(R)_i^k}(i,R).
\]
Since $\tau^\pi:\Rcal^N\rightarrow \Rcal^N$ is a bijection
\[\sum_{R\in\Rcal^N}\sum_{i\in N}1_{f(\tau^\pi(R))_i=\tau^\pi(R)_i^k}(i,R)=\sum_{R\in\Rcal^N}\sum_{i\in N}1_{f(R)_i=R_i^k}(i,R),\]
where $1_{f(R)_i=R_i^k}(i,R)=1$ if $f(R)_i=R_i^k$ and zero otherwise. From these two last equations we get that,
\[\sum_{i\in N}\sum_{R\in\Rcal^N}1_{f^\pi(R)_i=R_i^k}(i,R,\pi)=
\sum_{i\in N}\sum_{R\in\Rcal^N}1_{f(R)_i=R_i^k}(i,R).
\]
Thus,
\begin{equation}\label{Eq:3}\sum_{\pi\in\Pi}\sum_{i\in N}\sum_{R\in\Rcal^N}1_{f^\pi(R)_i=R_i^k}(i,R,\pi)=
n!\sum_{i\in N}\sum_{R\in\Rcal^N}1_{f(R)_i=R_i^k}(i,R).
\end{equation}
Similarly,
\begin{equation}\label{Eq:4}\sum_{\pi\in\Pi}\sum_{i\in N}\sum_{R\in\Rcal^N}1_{g^\pi(R)_i=R_i^k}(i,R,\pi)=
n!\sum_{i\in N}\sum_{R\in\Rcal^N}1_{g(R)_i=R_i^k}(i,R).
\end{equation}
Then, (\ref{Eq:3}) and (\ref{Eq:4}) in (\ref{Eq:Eq2}) yield
\[\sum_{i\in N}\sum_{R\in\Rcal^N}1_{f(R)_i=R_i^k}(i,R)=\sum_{i\in N}\sum_{R\in\Rcal^N}1_{g(R)_i=R_i^k}(i,R).\]
Equivalently,
\[
\sum_{i\in N} \left|\left\{R\in\Rcal^N:f(R)_i=R_i^k \right\}\right|=\sum_{i\in N} \left|\left\{R\in\Rcal^N:g(R)_i=R_i^k \right\}\right|.
\]
\end{proof}

We can then establish that a broker is treated on average worse than an owner. We prove this by comparing a mechanism that has a broker at the outset with the mechanism in which this agent receives the object as endowment, all other equal. Since the change affects only this agent, one can easily show that the agent is treated better, on average, when starts as being an owner instead of a broker. We know that the expected number of agents who receive their top choice, their second top choice and so on is the same for these two mechanisms  (Lemma~\ref{Lm-sum}). Thus, it must be that, on average, when the agent is a broker, he or she is treated worse than some other agent.

\begin{lemma}\rm\label{Lem:TC-one-broker-nobalanced}
Let $c\in \Ccal$. If there is a single broker and $n-1$ owners at the first step of the owner-and-broker algorithm with structure $c$, $TC_c$ is not balanced.
\end{lemma}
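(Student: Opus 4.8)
The plan is to compare $TC_c$ with the mechanism obtained from it by turning the lone broker into an owner, and then to route the comparison through Lemma~\ref{Lm-sum}. Since the hypothesis posits $n-1$ owners while there are only $n-1$ objects other than the brokered one, a counting argument forces each owner to own exactly one object; so at the first step some agent $i$ brokers some object $x$, and each $j\neq i$ owns a distinct $\omega_j\in H\setminus\{x\}$. Let $c'$ be the inheritance structure that coincides with $c$ on every nonempty submatching and has $c'_\emptyset(x)=(i,o)$ (keeping $c'_\emptyset(\omega_j)=(j,o)$ for $j\neq i$). I would first argue that $c'\in\Ccal$: replacing the unique initial broker by an owner of the same object only strengthens control rights and hence preserves the conditions of \citet{Pycia-Unver-TE-2017} and \citet{Bade-MOR-2020} (ownership persistence, the initial brokerage limits, and the remaining consistency conditions) that make the owner-and-broker algorithm define an efficient, group strategy-proof mechanism. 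Write $g:=TC_{c'}$.

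The core is a pointwise comparison for agent $i$. Run the two owner-and-broker algorithms in parallel on a profile $R$. If the best object of $R_i$ is not $x$, then $i$ points to the same object at the first step under both structures (her best object equals her best non-$x$ object), so the first-step cycles coincide; since $c$ and $c'$ agree off $\emptyset$, an induction on steps gives $g(R)=TC_c(R)$. If the best object of $R_i$ is $x$, then under $c'$ the pair $i,x$ forms a trivial cycle at the first step, so $g(R)_i=x=R_i^n$. In either case $g(R)_i\,R_i\,TC_c(R)_i$, with strict preference exactly on $D:=\{R\in\Rcal^N: R_i^n=x,\ TC_c(R)_i\neq x\}$, and $D\neq\emptyset$: on any profile at which every agent has the same preference ranking $x$ first, the first-step graph of $TC_c$ carries $i$ through a cycle that awards her her second-ranked object rather than $x$. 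With more care one can exhibit profiles in $D$ at which $TC_c$ gives $i$ her \emph{worst} object — let the owners $\{2,\dots,n-1\}$ absorb their own objects at the first step, so that $i$ is forced to trade $x$ only at the final step — which makes the counting step below quantitative. Symmetrically, turning $i$ from broker to owner of $x$ weakens no other agent's control rights, so $TC_c(R)_j\,R_j\,g(R)_j$ for every $j\neq i$ and every $R$ (the usual monotonicity of trading-cycle mechanisms under changes of the inheritance structure); I only need this in aggregate.

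To close, for a mechanism $h$ put $A_j(h):=\sum_{k=1}^{n} k\,\bigl|\{R\in\Rcal^N: h(R)_j=R_j^k\}\bigr|$, the rank-weighted total agent $j$ achieves under $h$. Lemma~\ref{Lm-sum} applied to $g$ and $TC_c$ yields $\sum_{j\in N}A_j(g)=\sum_{j\in N}A_j(TC_c)$, while the previous paragraph gives $A_i(g)>A_i(TC_c)$ and $A_j(g)\le A_j(TC_c)$ for $j\neq i$; hence $A_i(g)-A_i(TC_c)=\sum_{j\neq i}\bigl(A_j(TC_c)-A_j(g)\bigr)>0$, so the broker $i$ is treated strictly worse on average under $TC_c$ than some other agent, and (applying Lemma~\ref{Lm-sum} once more to $TC_c$ and any balanced $TTC_\omega$ from Lemma~\ref{Lem:balanced}) a balanced $TC_c$ would have to assign agent $i$ each rank exactly as often as $TTC_\omega$ does. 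The profiles in $D$ at which $i$ receives her worst object contradict this: because an owner of an object receives her worst object only when that object \emph{is} her worst, whereas the broker does not enjoy such protection, the broker ends up with her worst object strictly more often than is consistent with balancedness. Therefore $TC_c$ is not balanced.

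\emph{Main obstacle.} I expect the pointwise claim of the second paragraph to be the real work: verifying that the two algorithms run identically whenever $i$'s best object is not $x$, and that when it is $x$ the post-first-step subproblems the two mechanisms face agree on the profiles where $TC_c$ would have given $i$ the object $x$ anyway, both rest on the persistence/consistency properties of structures in $\Ccal$. The bookkeeping that converts ``$i$ does strictly better on average under $g$'' into ``$TC_c$ is not balanced'' also needs care; it is there that the monotonicity for the other agents and the explicit worst-object profiles in $D$ carry the argument.
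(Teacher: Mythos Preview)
Your strategy---turn the broker into an owner, compare, and route through Lemma~\ref{Lm-sum}---is exactly the paper's, but two steps in your execution do not go through.

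\medskip
\textbf{The comparison mechanism.} You define $c'$ by changing only $c_\emptyset(x)$ from $(i,b)$ to $(i,o)$ and keeping $c_\nu$ for every nonempty $\nu$. This $c'$ need not lie in $\Ccal$: ownership persistence demands that once $i$ owns $x$ at $\emptyset$, she own $x$ at every $\nu\supseteq\emptyset$ with $i\in N^c_\nu$; but $c'_\nu=c_\nu$ may well have $c_\nu(x)=(i,b)$ (or even assign $x$ to someone else), since under $c$ agent $i$ was only a broker at $\emptyset$ and brokerage does not persist. Without $c'\in\Ccal$ you cannot invoke Lemma~\ref{Lm-sum} for $g=TC_{c'}$. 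The paper avoids this entirely by taking the comparison mechanism to be $TTC_\omega$ with $\omega_i=x$ and $\omega_j=h_j$ for $j\neq i$: this is manifestly efficient, group strategy-proof, and---crucially for the last step---\emph{balanced} by Lemma~\ref{Lem:balanced}.

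\medskip
\textbf{The closing step.} Even granting $A_i(g)>A_i(TC_c)$ and $\sum_j A_j(g)=\sum_j A_j(TC_c)$, you have not shown $A_i(TC_c)<A_j(TC_c)$ for some $j$; that conclusion requires the comparison mechanism to be balanced, which your $g$ is not known to be. Your fallback through worst-object counts (``the broker ends up with her worst object strictly more often'') is precisely the claim that needs proof, not an argument for it. The paper's closure is much shorter: it proves only the single-rank containment
\[
\{R:TC_c(R)_i=R_i^n\}\subsetneq\{R:TTC_\omega(R)_i=R_i^n\},
\]
then Lemma~\ref{Lm-sum} supplies a $j$ with the reverse strict inequality, and balancedness of $TTC_\omega$ chains these into $|\{R:TC_c(R)_i=R_i^n\}|<|\{R:TC_c(R)_j=R_j^n\}|$. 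Note also that your pointwise monotonicity claim $TC_c(R)_j\,R_j\,g(R)_j$ for $j\neq i$ is neither standard nor needed; Lemma~\ref{Lm-sum} already forces some $j$ to move in the opposite direction once $i$ moves strictly.
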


\begin{proof}
Consider a TC mechanism with one broker $i$ who brokers $h_i$ and $n-1$ owners, i.e., each $j\in N\setminus \{i\}$ owns $h_j$. We denote this mechanism by  $TC$. Now, consider a TTC from individual endowments where each $i\in N$ initially owns object $h_i$. We denote this mechanism by $TTC$. We claim that
\begin{equation}\label{Eq:5}
\left\{R\in\Rcal^N:TC(R)_i=R_i^n \right\}\subsetneq\left\{R\in\Rcal^N:TTC(R)_i=R_i^n \right\}.
\end{equation}
Let $R\in\Rcal^N$ be such that $TC(R)_i=R_i^n$, i.e., $TC$ assigns the top choice to agent $i$ at profile $R$. There are two cases. If $TC(R)_i\neq h_i$, then at each step of the algorithm that computes $TC(R)$, agent $i$ points to her most preferred object. Thus, each cycle that forms in this instance of this algorithm is also a cycle in the algorithm that computes $TTC(R)$. Thus, $TC(R)=TTC(R)$. Now, if $TC(R)_i=h_i$, then $h_i=\tops(R_i,H)$ and $TTC(R)_i=h_i$. Thus,
\[
\left\{R\in\Rcal^N:TC(R)_i=R_i^n \right\}\subseteq\left\{R\in\Rcal^N:TTC(R)_i=R_i^n \right\}.
\]
Now, let $j\neq i$ and consider a profile $R$ in which all agents rank $h_i$ top and $h_j$ second. Clearly, $TTC(R)_i=h_i$ and $TC(R)_i=h_j$. Thus, (\ref{Eq:5}) holds. This implies that,
\[
\left|\left\{R\in\Rcal^N:TC(R)_i=R_i^n \right\}\right|<\left|\left\{R\in\Rcal^N:TTC(R)_i=R_i^n \right\}\right|.
\]
By Lemma~\ref{Lm-sum},
\[
\sum_{j\in N} \left|\left\{R\in\Rcal^N:TC(R)_j=R_j^n \right\}\right|=\sum_{j\in N} \left|\left\{R\in\Rcal^N:TCC(R)_j=R_j^n \right\}\right|.
\]
Thus, there is $j\in N\setminus\{i\}$ such that
\[
\left|\left\{R\in\Rcal^N:TC(R)_j=R_j^n \right\}\right|>\left|\left\{R\in\Rcal^N:TTC(R)_j=R_j^n \right\}\right|.
\]
Now, by Lemma~\ref{Lem:balanced}, $TTC$ is balanced. Thus,
\[\left|\left\{R\in\Rcal^N:TTC(R)_i=R_i^n \right\}\right|=\left|\left\{R\in\Rcal^N:TTC(R)_j=R_j^n \right\}\right|.\]
Thus,
\[
\left|\left\{R\in\Rcal^N:TC(R)_i=R_i^n \right\}\right|<\left|\left\{R\in\Rcal^N:TC(R)_j=R_j^n \right\}\right|.
\]
Thus, $TC$ is not balanced.
\end{proof}

\begin{proof}[Proof of Theorem~\ref{Th:Main}] Suppose that $f$ is balanced, efficient, and group strategy-proof. Then, there is $c\in \Ccal$ such that $f=TC_c$ \citep{Pycia-Unver-TE-2017,Bade-MOR-2020}. Since~$c$ satisfies initial brokerage limits, there are only three cases.

(i) There are no brokers at the first step of the owners-and-brokers algorithm associated with $c$: We claim that $TC_c$ is a TTC from individual endowments. Suppose by contradiction that this is not so. Thus, there is at least an agent who owns at least two objects at the first step of the owners-and-brokers algorithm associated with~$c$. By Lemma~\ref{Lem:owner-two-houses}, $TC_c$ is not balanced. This is a contradiction.

(ii) There is a single broker at the first step of the owners-and-brokers algorithm associated with $c$: We claim that there are $n-1$ owners at the first step of the owners-and-brokers algorithm associated with $c$. If this were not so, there would have to be at least an agent who owns two objects at the first step of the owners-and-brokers algorithm associated with $c$. By Lemma~\ref{Lem:owner-two-houses}, $TC_c$  would not be balanced. Since there are one broker and $n-1$ owners at the first step of the owners-and-brokers algorithm associated with $c$, by Lemma~\ref{Lem:TC-one-broker-nobalanced}, $TC_c$ is not balanced. This is a contradiction. Thus case (ii) never holds

(iii) There are three brokers and $TC_c$ is a TC mechanism with three brokers.

Thus, $TC_c$ is either a TTC from individual endowments or a TC mechanism with three brokers.

Lemma~\ref{Lem:balanced} implies that both TTC from individual endowments and TC mechanisms with three brokers are balanced. These mechanisms are efficient and group strategy-proof \citep{Pycia-Unver-TE-2017,Bade-MOR-2020}.
\end{proof}

The three axioms in Theorem~\ref{Th:Main} are independent. If one of them is violated, then there exists a mechanism that satisfies the other two. A serial dictatorship is efficient and group strategy-proof \citep{Svensson-1999-SCW}, but not balanced. One can easily see that a constant mechanism is balanced and group strategy-proof, but not efficient. The following mechanism is efficient and balanced, but not group strategy-proof.

\begin{example}\rm\label{Ex-tightness}Let $N=\{1,2,3\}$, $O=\{a,b,c\}$, initial endowment $\omega=(a,b,c)$, two preference profiles $\widehat R$ and $\widetilde R$, and the mechanism $\psi$ are given by
\begin{table}[ht]
\centering
\begin{tabular}{c c c}
$\widehat R_1$ & $\widehat R_2$ & $\widehat R_3$ \\
\hline
$\boxed{b}$ & $a$ & $\boxed{a}$\\
$c$ & $\boxed{c}$ & $c$\\
$a$ & $b$ & $b$
\end{tabular}
\hspace{1 cm}
\begin{tabular}{c c c}
$\widetilde R_1$ & $\widetilde R_2$ & $\widetilde R_3$ \\
\hline
$\boxed{c}$ & $\boxed{a}$ & $a$\\
$b$ & $b$ & $\boxed{b}$\\
$a$ & $c$ & $c$
\end{tabular}
\end{table}
\begin{align*}
\psi(R)=\left\{
\begin{array}{cl}
(b,c,a), & \textrm{if}\ R=\widehat R;\\
(c,a,b), & \textrm{if}\ R=\widetilde R;\\
TTC_\omega(R), &\textrm{otherwise}.
\end{array}
\right.
\end{align*}
The mechanism $\psi$ is efficient, for $TTC_\omega$ is efficient and the allocations for both $\widehat R$ and $\widetilde R$ are efficient for these profiles. It is balanced, because $TTC_\omega$ is balanced, and for each $i\in N$ and each $k\in\{1,2,3\}$, $|\{R\in\{\widehat R,\widetilde R\}:\psi(R)_i=R_i^k\}|=|\{R\in\{\widehat R,\widetilde R\}:TTC(R)_i=R_i^k\}|$. However, $\psi$ is not group strategy-proof. If $\overline R_2=aP_2bP_2c$, then $\psi_2(\widehat R_{-2}, \overline R_2)\widehat P_2\psi_2(\widehat R)$. $\triangle$
\end{example}

\section{Concluding remarks}\label{Sec.COnclusion}

We have articulated a fairness notion for the house allocation problem. Our axiom, balancedness, requires a mechanism to assign the top choice, the second top choice, and so on, for the same number of profiles to all agents. This axiom guarantees symmetric  treatment at the stage in which the mechanism is announced and preferences are not reported yet. It implicitly assumes that all preferences profiles are equally likely.

The main message of our paper is that with an interesting exception in the three-agent case, the Top Trading Cycles from individual endowments are the only mechanisms that are balanced and satisfy efficiency and group strategy-proofness. Thus, Top Trading Cycles is the only mechanism that avoids the perception of bias by a mechanism designer at the stage in which the mechanism is announced and  also has desirable efficiency and incentives properties.

It is interesting but outside the scope of our paper to extend our notion of balancedness to school choice environments. In this context it is usually assumed that students are indifferent among all seats in a given school. If schools have different capacities, Top Trading Cycles is not balanced any longer. It is an open question to determine whether an efficient and group strategy-proof mechanism can be singled out as optimal in this class, with respect to the balancedness criterion, or at least unambiguously improves with respect to the Top Trading Cycles. A similar question applies to environments in which preference profiles are not all equally probable (and its probability is not symmetric as in \citealp{Freeman-et-al-2021}).

\bibliography{ref-BHA}

\end{document}